\newtheorem{theorem}{Theorem}
\newtheorem*{theorem*}{Theorem}
\newtheorem{Lemma}[theorem]{Lemma}
\newtheorem{proposition}[theorem]{Proposition}
\newtheorem{result}[theorem]{Result}
\begin{document}
\title{Killing of Transport in Lattices driven by Local Quantum Stochastic Dynamics. } 
\author{Beno\^it Descamps}%

\email{benoit.descamps@univie.ac.at}
\affiliation{Faculty of Physics, University of Vienna, Austria}%

\date{\today}

\begin{abstract}
Systems with local dynamics are characterized by a finite velocity of propagation of perturbations, known as the Lieb-Robinson velocity. On the other hand, irreducible stochastic processes drive states towards some unique fixed point. However, combining both effects is mathematically challenging. The bounds on propagation do not depend on system-size, while the theory of mixing is mostly based on extensive upper-bounds.
In \cite{Descamps}, a class of local Lindbladian operators on arbitrary lattice was constructed, for which the two effects could be combined.
In this paper, we show that for some local dynamics, local observables are propagated inside a much smaller convex subset of the total Banach space. This allows us to show localization for such dynamics.
\end{abstract}

\maketitle

\section{Introduction}
In 1958, Anderson \cite{Anderson} showed that random impurity potentials produce the localization of quantum wavefunctions. This  lead to a better understanding of transitions between insulators and conductors.
In quantum spin-lattice, it can be shown that due to (quasi)-local interactions perturbations propagate at a finite velocity. This was proven by Lieb and Robinson\cite{LIEB}. Recently, there has been an increasing interest in studying local dissipative  dynamics. One of the major drive behind, is the possibility of using dissipation as a tool for engineering states. Similarly to classical Monte-Carlo algorithm, the goal of course is to drive any state as fast as possible towards a desired state. 
When introducing local dissipation to the system, the propagation of the perturbation competes with the global relaxation of the system.
In \cite{2007PhRvL..99p7201B, BOE}, a study of such competition was first attempted. In \cite{Descamps}, it was shown that this competition could  not only effectively reduce the velocity of propagation but also lead to some increasing deceleration. All the results so far were studied for systems with the uniform state as unique fixed point.
For some development of quantum technologies, localization of excitations on pure states such as graph state is of general interests \cite{citeulike:5219312}.

\subsection{Local Quantum Stochastic Dynamics}
Consider a $D$-dimensional lattice $\mathbb{Z}^D$ with a metric. At each point $x\in \mathbb{Z}^D$ of the lattice, define a $d$-dimensional Hilbert space $\mathcal{H}_x$ and for each finite set $\Lambda \subset  \mathbb{Z}^D$ denote the product space,
$$\mathcal{H}_\Lambda = \bigotimes\limits_{x\in \Lambda}\mathcal{H}_x$$
Denote the algebra of all matrices acting on $\mathcal{H}_\Lambda$, i.e. the algebra of local observables of $\Lambda$, $\mathcal{A}_\Lambda$,
$$\mathcal{A}_\Lambda = \bigotimes\limits_{x\in \Lambda}\mathcal{B}(\mathcal{H}_x)$$ 
If $\Lambda_1 \subset \Lambda_2$, the algebra $\mathcal{A}_{\Lambda_1}$ can be identified with the algebra $\mathcal{A}_{\Lambda_1}\otimes \mathbb{1}_{\Lambda_1 \backslash \Lambda_2}$ and therefore  $\mathcal{A}_{\Lambda_1} \subset \mathcal{A}_{\Lambda_2}$.
Define the support of a local observable $A\in \mathcal{A}_\Lambda$ as the minimal set $X\subset \Lambda$ for which $A=A'\otimes \mathbb{1}_{\Lambda\backslash X}$.

The dynamics of the systems is generated by some local Lindbladian,
\begin{align*}
\partial_t A(t)=L [A(t)],~~L[.]=\sum_Z I_Z[.]=\sum_Z R_Z^\dagger[.]R_Z+Q_Z[.]+[.]Q_Z^\dagger,~~\sum_Z R_Z^\dagger R_Z+Q_Z+Q_Z^\dagger=0,~~Q_Z\in \mathcal{A}_Z
\end{align*}
where the last condition insures that the dynamics preserves unity.

\subsection{From local Response to Lieb-Robinson Velocity}
By Frobenius-Perron, the adjoint of the dynamics has some positive fixed point $\rho_{\mbox{{\scriptsize eq}}}$,
$$L_0^{\mbox{{\scriptsize ad}}}[\rho_{\mbox{{\scriptsize eq}}}]=0,~~\rho_{\mbox{{\scriptsize eq}}}\geq0, \operatorname{Tr}\rho_{\mbox{{\scriptsize eq}}}=1$$
Let us now perturb the dynamics at some local subset $Z$, be it time-independently or not,
$$\partial_t A(t)=L_0[A(t)]+\mathcal{\phi}_Z(t) [A(t)]$$ 
We wish then to compare the expectation value of the undisturbed system $\langle A(t)\rangle=\langle A\rangle$ with the new one, $\langle A(t)\rangle_\phi$.
This is readily derived,
$$\langle A(t)\rangle_\phi-\langle A\rangle=\int_0^t \operatorname{Tr}\left(\rho_{\mbox{{\scriptsize eq}}}\phi_Z[A(s)]\right)$$
Hence the response of the systems, is determined by bounding the term inside the integral. Since $\operatorname{Tr}(\rho_{\mbox{{\scriptsize eq}})}=1$, this becomes,
$$\operatorname{Tr}\left(\rho_{\mbox{{\scriptsize eq}}}\phi_Z[A(t)]\right)\leq \|\phi_Z[A(s)]\|$$
Such expression was studied for the first time by Lieb and Robinson and by reducing the problem to a Laplacian inequality it was shown that,
\begin{align}\|\phi_Z[A(s)]\|\leq C \exp\left(\frac{vt-d(Z,A)}{\xi}\right)\label{LRB}\end{align}
where $C,v$ and $\xi$ are finite constants.

\subsection{Killing of Transport}
In \cite{Descamps}, it was shown that under some symmetry a class of Lindbladians could be constructed so that the bound takes the form,
\begin{align}
\|\phi_Z[A(s)]\|\leq C \exp\left(\frac{\int_0^t \exp(-\mu s)v\  ds-d(Z,A)}{\xi}\right)\label{LRBDesc}\end{align}
As pointed out, if one could show,
\begin{align}\max_{X}\frac{\|\exp\left(tL^c\right)[X]-\lim_{t\to\infty}\exp\left(tL^c\right)[X]\|}{\|X\|}\leq C' \exp(-\lambda t) \label{decay}
\end{align}
where $\mathcal{L}^c$ is the full generator of the dynamics minus some local interaction terms, and $C'$ is a finite constant, the bound can be shown to be of the form,
\begin{align}
\|\phi_Z[A(s)]\|\leq C'' \exp\left(\frac{vt-\lambda t-d(Z,A)}{\xi}\right)\label{LRB2}\end{align}
This becomes particularly interesting when $\lambda\approx v$, as this imply that $A(t)$ is localized inside a fixed volume for all $t$.
Proving the case for which an inequality such as (\ref{decay}) is satisfied is hard. 
We present here an alternative. We show a result similar to (\ref{decay}), by relaxing the set of (global) observables to a smaller convex subset. Under the additional property that the dynamics preserve the convex structure, a bound such as (\ref{LRB2}) can then be proven, and even more so localization can be implied.

\section{Localization}
Consider some local basis $\{\sigma^{(j)}_{k}\}$ of the local Hilbert space $\mathbb{C}^d_{(j)}$. For the lattice $\otimes_{j\in\Lambda}\mathbb{C}^d_{(j)}$, define the convex hull of the set constructed from the local basis multiplied by some phase $\mathcal{C}=\operatorname{Conv}\left(\{e^{i\phi}\otimes_j \sigma^{(j)}_{k_j}|\phi\in [0,2\pi]\}\right)$.
We say that this set is normalizable with local observables, if $\forall A\in \mathcal{A}_{\Lambda_A}$, there is some finite constant $0<C_A<\infty$, so that, $A/C_A \in \mathcal{C}$.

We say that a local dynamics preserves $\mathcal{C}$ if, 
\begin{align}
\forall t, X\in \mathcal{C}, \exists Y(t) \in \mathcal{C}, 0\geq r(t)\leq 1,~~\exp(tL)[X]=r(t) Y(t) \label{structure}
\end{align}

As, we see further on, it is then sufficient to restrict (\ref{decay}) to the inequality,
\begin{align}\max_{X\in \mathcal{C},\|X\|=1}\frac{\|\exp\left(tL^c\right)[X]-\lim_{t\to\infty}\exp\left(tL^c\right)[X]\|}{\|X\|}\leq \exp(-\lambda t) \label{decay2}
\end{align}
where $\mathcal{L}^c$ denotes the global generators minus any number of local interactions $I_Z$.
As usual in order to take into account long range interaction, we introduce the usual formalism of reproducing functions.
We assume there exists a non-increasing function $F:[0,\infty)\to (0,\infty)$ so that $F$ is uniformly integrable over the lattice,

$$\| F \| := \underset{x\in \mathbb{Z}^d}{\operatorname{sup}}\sum\limits_{y\in \mathbb{Z}^d}F(d(x,y))<\infty$$

and there is a constant $C_{\mu}$,

$$C_\mu := \underset{x,y\in \mathbb{Z}^d}{\operatorname{sup}}\sum\limits_{z\in \mathbb{Z}^d}\frac{F(d(x,y))F(d(y,z)}{F(d(x,z))}e^{-\mu(d(x,y)+d(y,z)-d(x,z))}$$
so that $C_{\mu}\big|_{\mu=0}<\infty$.

Assume there exists a $\mu>0$, so that,

\begin{align*}\|L\|_{\mu}= \underset{x,y\in \mathbb{Z}^d}{\operatorname{sup}}\sum\limits_{Z \ni x,y} \frac{\|I_Z\|_{cb}}{F(d(x,y))} e^{\mu d(x,y)}<\infty
\end{align*}

Finally, let us assume there exists some finite constant $0<C_{phi}<\infty$ so that $\forall Z$, $\forall X\in \mathcal{C}$,
$$L_Z[X]/C_{X}\in \mathcal{C},~~C_{X}\leq C_{\phi} \forall X $$
We say that the system is frustration free if,
$$\forall Z,~~ \mathcal{L}_Z\circ \lim_{t\to\infty}\exp(t\mathcal{L})=0$$

In order to derive localization, the dynamics needs to be split in two other local one. The first one is responsible for the spreading the perturbation, while the second is the main dissipative part which erases the perturbation.
Clearly, for localization to be possible, perturbation needs to be driven towards some zero-correlated subspace. Hence, we assume that the local parts of $\mathcal{L}_1=\sum_Z I^1_Z$ mutually commute.
For the sake of the fluidity of the argument, the interaction terms of both generators are grouped inside some finite subset $Z$. 
$$L=L_0+L_1=\sum_Z (I^{0}_Z+ I^{1}_{Z})$$

We can now state, the theorem.
\begin{result}
  Let $L=L_0+L_1=\sum_Z (I^{0}_Z+ I^{1}_{Z})$ be frustration free  and irreducible local Lindblad generator satisfying (\ref{structure}) and(\ref{decay2}), and $[I^1_Z,I^1_{Z'}]=0$.Then $\forall A\in \mathcal{A}_{\Lambda_A}, B\in \mathcal{A}_{\Lambda_B}$ with $\Lambda_A \cap \Lambda_B=\emptyset$,
\begin{equation}
\|[B,\exp(tL)[A]]\|\leq 2 \frac{\|F\|}{C_{\mu}}\|A\|\ \|B\|_{cb}\ \operatorname{min}\{\Lambda_A,\Lambda_B\}\left(\exp\left[vt-\lambda t\right]-1\right)\exp \left[ -  \mu d(A,B)\right]
\label{eventh}
\end{equation} 
with $v=C_{\phi}\|L^{0}\|_{\mu}$, and,
\begin{align}\max_{X\in \mathcal{C},\|X\|=1}\frac{\|\exp\left(tL_1\right)[X]-\lim_{t\to\infty}\exp\left(tL_1\right)[X]\|}{\|X\|}\leq \exp(-\lambda t) \label{decay2}
\end{align}

\end{result}

Hence, we see that the velocity and the decay rate are decoupled from one another. By varying the way between the diffusive part $\mathcal{L}^0$ an the dissipative part $\mathcal{L}^1$, it is then possible to get $\lambda>v$, from which localization follows.

As the proof of the result is extremely similar to the proof presented in the "further discussion" of \cite{Descamps}, we leave the proof in the appendix.

\subsection{Classical Stochastic Processes}
It is straightforward to embed a classical stochastic dynamics into a quantum one.
For example in the case of a local stochastic generator $L=\sum_Z T_Z-I_Z$, where $I_Z$ is the identity operator and $T_Z$ is local stochastic. Hence, this can be embedded into a Lindbladian using the local standard basis $\left\{e(Z)_j\right\}$,
\begin{align}\mathcal{L}[.]=\sum_Z \sum_{j} \sqrt{\operatorname{Tr}(e(Z)_jT_Z)}~~ e(Z)_j[.]e(Z)_j^\dagger-[.]_Z
\label{classstoch}
\end{align}
In the case of local Hilbert spaces $\mathbb{C}^2_j$, $e(Z)_{ij}=|i\rangle\langle j|$. Additionally, we can see a clear decoupling between the vector spaces $\operatorname{span}\{\otimes \sigma_j|\sigma_j\in\{\mathbb{1},\sigma_z\}\}$ and $\operatorname{span}\{\otimes \sigma_j|\sigma_j\in\{\mathbb{1},\sigma_x, \sigma_y\}\}$.
Hence for simplicity, let us restrict to dynamics on lattices $\otimes_{j\in \Lambda}\mathbb{C}^2$, and dynamics (\ref{classstoch}) restricted to observables in $\operatorname{span}\{\otimes \sigma_j|\sigma_j\in\{\mathbb{1},\sigma_z\}\}$

Define now the local unitary $U_j=\left(\sigma_z+\sigma_x\right)/\sqrt{2}$. It is then also understood that $U_Z=\otimes_{j\in Z}U_j$.
In the following definition the matrix $|A|_{ij}$ is understood as the entry-wise absolute value. 
The following proposition gives us a local condition that implies property  (\ref{structure}).
\begin{proposition}
If the matrix $|U_Z^\dagger T_Z^\dagger U_Z|_{ij}$ is sub-stochastic then the dynamics (\ref{classstoch}) preserves the convex set $\operatorname{span}\{e^{i\phi}\otimes \sigma_j|\sigma_j\in\{\mathbb{1},\sigma_z\},\phi\in[0,2\pi]\}$ as defined in  (\ref{structure}).
\end{proposition}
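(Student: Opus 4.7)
The plan is to work in the ``Pauli-Z tensor'' basis that spans the invariant subalgebra of diagonal observables. For $s\in\{0,1\}^Z$ set $\sigma_s=\bigotimes_{j\in Z}\sigma_{s_j}$ with $\sigma_0=\mathbb{1}$ and $\sigma_1=\sigma_z$, so that the extreme points of $\mathcal{C}$ restricted to $\mathcal{A}_Z$ are exactly the $e^{i\phi}\sigma_s$. A direct calculation identifies the Heisenberg generator of (\ref{classstoch}) restricted to $\operatorname{span}\{\sigma_s\}$ with the matrix $M-I$, where $M=U_Z^\dagger T_Z^\dagger U_Z$ is precisely the matrix whose entry-wise modulus is assumed sub-stochastic; this is just the Walsh--Hadamard change of basis from $\{|x\rangle\langle x|\}$ to $\{\sigma_s\}$. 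First I would reduce to extreme points: by linearity of $\exp(tL)$ and convexity, if each extremal $X_k=e^{i\phi_k}\sigma_{s_k}$ satisfies $\exp(tL)[X_k]=r_k(t)Y_k(t)$ with $Y_k(t)\in\mathcal{C}$ and $r_k(t)\in[0,1]$, then for $X=\sum_k p_kX_k$ one may take $r(t)=\sum_k p_kr_k(t)$ and $Y(t)=r(t)^{-1}\sum_k p_kr_k(t)Y_k(t)\in\mathcal{C}$ (setting $Y(t)$ arbitrary in $\mathcal{C}$ when $r(t)=0$).

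The key analytic step is to bound the coefficients $c_{s's}(t)$ in $\exp(tL)[\sigma_s]=\sum_{s'}c_{s's}(t)\sigma_{s'}$. From the identification above, $c_{s's}(t)=e^{-t}(\exp(tM))_{s's}$, and the term-by-term triangle inequality applied to the power series, together with $|(M^n)_{s's}|\leq(|M|^n)_{s's}$, gives
\[
|c_{s's}(t)|\leq e^{-t}(\exp(t|M|))_{s's}=(\exp(t(|M|-I)))_{s's}.
\]
Now I would show that the right-hand side forms a column sub-stochastic matrix: if $v$ is the all-ones row vector, sub-stochasticity of $|M|$ means $v|M|\leq v$, so $v|M|^n\leq v$ by induction, and consequently $v\exp(t(|M|-I))=e^{-t}\sum_n(t^n/n!)v|M|^n\leq v$. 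Setting $r(t):=\sum_{s'}|c_{s's}(t)|\leq1$ and $Y(t):=r(t)^{-1}\sum_{s'}|c_{s's}(t)|e^{i\arg c_{s's}(t)}\sigma_{s'}$ exhibits $\exp(tL)[\sigma_s]=r(t)Y(t)$ as $r(t)$ times a convex combination of extreme points $e^{i\phi}\sigma_{s'}\in\mathcal{C}$, which is exactly property (\ref{structure}).

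The main obstacle I anticipate is not the dynamical estimate, which is elementary once the correct matrix representation is in hand, but rather the bookkeeping that fixes conventions and produces $M=U_Z^\dagger T_Z^\dagger U_Z$ on the nose: one has to carefully pass between the Schr\"odinger action of $T_Z$ on $\{|x\rangle\langle x|\}$, the Heisenberg adjoint action on diagonal observables, and the Walsh--Hadamard change of basis implemented by $U_Z=\bigotimes_{j\in Z}(\sigma_z+\sigma_x)/\sqrt{2}$, tracking transposes and normalisations so that the hypothesis of the proposition matches the sub-stochasticity needed in the estimate above. After that identification every remaining step is a one-line use of the triangle inequality and of the fact that $\exp(t(P-I))$ is sub-stochastic whenever $P$ is.
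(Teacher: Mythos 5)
Your proof is correct and follows essentially the same route as the paper's: both reduce property (\ref{structure}) to the statement that the absolute column sums of the generator's matrix in the Hadamard-rotated basis $\{\otimes_j \sigma_{s_j}\}$ are at most one, which is exactly sub-stochasticity of $|U_Z^\dagger T_Z^\dagger U_Z|_{ij}$. The only inessential difference is in how the semigroup is handled: the paper Trotterizes $\exp(tL)$ into factors of the form $(1-\epsilon)I+\epsilon T$ and uses convexity of $\mathcal{C}$ factor by factor, whereas you bound the power series of $\exp(t(M-I))$ entrywise via $|\exp(tM)|\leq \exp(t|M|)$ and check that column sub-stochasticity survives exponentiation.
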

\begin{proof}
Since,
$$\exp(A+B)=\lim_{n\to\infty}\left((I+A/n)(I+B/n)\right)^n$$
and due to the definition of (\ref{classstoch}), we can decompose each interation term as $(1-\epsilon) I +\epsilon T$. Hence, it is sufficient to show that, $\sqrt{\operatorname{Tr}(e(Z)_jT_Z)}~~ e(Z)_j[.]e(Z)_j^\dagger$ satisfy the claim.
It can be seen that the claim is equivalent to showing that $\forall \tau \in \{\otimes \sigma_j|\sigma_j\in\{\mathbb{1},\sigma_z\}\}$ ,
$$\sum_{\sigma \in\{\otimes \sigma_j|\sigma_j\in\{\mathbb{1},\sigma_z\}\}}|\langle \sigma|T_Z|\tau\rangle|\leq 2^{|Z|}$$
which is indeed equivalent to $|U_Z^\dagger T_Z^\dagger U_Z|_{ij}$ being sub-stochastic.
\end{proof}
As an illustrative example, we can consider the dynamics,
$$\mathcal{L}[.]=\mathcal{L}^{0}[.]+\lambda \left(\frac{1}{2}\sum_j \operatorname{Tr}_j[.]-[.]\right)$$
Clearly we can see,
\begin{align*}
\max_{\mathcal{C},\|X\|=1}\|\exp\left(tL^c\right)[X]-\lim_{t\to\infty}\exp\left(tL^c\right)[X]\|&\leq \lim_{n\to\infty}\|(1-t\lambda/n)[X]+t\lambda/2n \sum_j \operatorname{Tr}_j[X]-P[X]]\|^n\\
&\leq \exp(-\lambda t)
\end{align*}
with $P[X]=\left(\frac{1}{2}\right)^{|\Lambda|}$.

\subsection{Engineering dynamics of Graph States}
As a second and final illustration, we show that some Lindbladians  known to generate graph states satisfy the properties (\ref{decay2}) and (\ref{structure}) for $\mathcal{C}=\operatorname{span}\{e^{i\phi}\otimes \sigma_j|\sigma_j\in\{\mathbb{1},\sigma_x,\sigma_y,\sigma_z\},\phi\in[0,2\pi]\}$.
Graph states are defined as the unique eigenstate of the set of commuting observables $U_k=\sigma_{x,k} \otimes_{j\in Z}\sigma_{z,j}$, with $Z$ some neighborhood $k$. As shown in \cite{Kraus}, defining $c^{\alpha}_k=\frac{1}{2}(I+U_k)\sigma_{x,\alpha}$ with $\alpha\in\{x,y\}$, the graph state can be prepared by each local Lindbladian,
$$\mathcal{L}^{\alpha}[.]=\sum_k c_k^{\alpha\dagger}[.] c_k^{\alpha}-\frac{1}{2}\{ c_k^{\alpha\dagger} c_k^{\alpha},.\},~~\alpha\in\{x,y,z\}$$
Notice that $P_k=\frac{1}{2}(I+U_k)$ is a projector. 
Let us therefore take $\Sigma \in \{\otimes \sigma_j|\sigma_j\in\{\mathbb{1},\sigma_x,\sigma_y,\sigma_z\}\}$.
Let us fix $\alpha=x$ as the other case are proven similarly.
We need then to distinguish between the cases $\sigma_{\alpha,k} \Sigma=\pm \Sigma \sigma_{\alpha,k}$ and $U_k \Sigma=\pm \Sigma U_k$ and calculate for each,
\begin{enumerate}
\item  $\sigma_{\alpha,k} \Sigma=\Sigma \sigma_{\alpha,k},~U_k \Sigma=\Sigma U_k$ implies that $\mathcal{L}^{x}_k[\Sigma]=0$ and so $\exp(t\mathcal{L}^{x}_k)[\Sigma]=\Sigma$.

\item  $\sigma_{\alpha,k} \Sigma=\Sigma \sigma_{\alpha,k},~U_k \Sigma=-\Sigma U_k$
 gives $\mathcal{L}^{x}_k[\Sigma]=-\Sigma$ and so, $\exp(t\mathcal{L}^{x}_k)[\Sigma]=e^{-\lambda t}\Sigma$

\item $\sigma_{\alpha,k} \Sigma=-\Sigma \sigma_{\alpha,k},~U_k \Sigma=\Sigma U_k$ yields $\mathcal{L}^{x}_k[\Sigma]=-\Sigma-\Sigma \sigma_{\alpha,k}U_k\sigma_{\alpha,k}$ and so, $\exp(t\mathcal{L}^{x}_k)[\Sigma]=\Sigma P_k+e^{-2\lambda t}\Sigma (\mathbb{1}-P_k)$

\item $\sigma_{\alpha,k} \Sigma=-\Sigma \sigma_{\alpha,k},~U_k \Sigma=-\Sigma U_k$ also gives $\mathcal{L}^{x}_k[\Sigma]=-\Sigma$ and so, $\exp(t\mathcal{L}^{x}_k)[\Sigma]=e^{-\lambda t}\Sigma$
\end{enumerate}
Since $\sigma_{\alpha,k}U_k\sigma_{\alpha,k} \in \mathcal{C}$, all cases therefore imply (\ref{structure}).
Finally in order to prove (\ref{decay2}), since per construction the Lindbladian has a dimensional one kernel, case 1 cannot be satisfied for all $k$. The claim also follows if either case 2 or 4 are fulfilled for a single $k$. Hence, the only remaining point to look at, is when case 3 is satisfied at least once. So for all $k$, $[\Sigma,P_k]=0$, and we can take mutual eigenvectors $|\alpha\rangle$,  $\Sigma|\alpha\rangle=s_{\alpha}|\alpha\rangle$, $s\alpha=\pm1$ and $P_k |\alpha\rangle= p_{\alpha,k}|\alpha\rangle$, $p_{\alpha,k}\in\{0,1\}$. Clearly, since the graph state is per definition the unit vector which has eigenvalue $1$ for all $P_k$, it has to be an eigenvector of $\Sigma$ too. Hence for all $|\alpha\rangle $ orthogonal to the graph state, we have,
$$|\langle \alpha|\exp(t\mathcal{L}^{\alpha})[\Sigma]-\lim_{t\to\infty}\exp(t\mathcal{L}^{\alpha})[\Sigma]|\alpha\rangle|\leq e^{-2t}$$
Hence all case combined, we see that the decaying condition (\ref{decay2}) is satisfied.

\section{Conclusion}
We studied a class of local Lindbladians which preserves a particular local convex structure. We show that in the irreducible case, a competition can emerge between the diffusive and strongly mixing part of the generator. We see that the diffusion property can be decoupled from the mixing. Localization is then implied for a finite mixing rate of the same order as the Lieb Robinson velocity.
At the end, we showed various construction from classical stochastic processes, to more complex quantum dynamics with graph states as fixed point.
\appendix
\section{Proof of Result 1}
\begin{proof}
The proof is very similar to the proofs presented in [REF DESCAMPS]
Define a path of  $n$ steps, $\mathcal{P}_n$, between $A$ and $B$ as the set of the support of the local interactions $I_{\Lambda_j}$,
$$\mathcal{P}_n=\{\Lambda_j|\Lambda_{B} \cap \Lambda_1 \not=\emptyset , \Lambda_1 \cap \Lambda_2 \not=\emptyset,...,\Lambda_n \cap \Lambda_{A}  \not=\emptyset \mbox{ , all other intersections are empty}\}$$ 
Denote $\mathcal{P}_n^k $ as the subset of 
$\mathcal{P}_n$ containing the first $k$ steps of the path, where $\mathcal{P}_n^0 =\emptyset $. 
Define the generator part $\mathcal{L}^0$ containing local interaction which do not intersect with the path,
$$L^c_{\mathcal{P}_n^k}= L-\sum_{Z\cap \mathcal{P}_n^k\not= \emptyset } I^{0}_Z $$

In Lemma 1 we show,
\begin{equation*}
\|[B, \exp(tL)A]\| \leq  \sum_{n=1}^\infty \|L\|_\mu^n \left(\sup\limits_{\mathcal{P}_n}\mathcal{J}_{\mathcal{P}_n}\right) \sum\limits_{\mathcal{P}_n}\sum\limits_{x_0\in \Lambda_B}\prod_{\Lambda_j \in \mathcal{P}_n}\sum_{x_j \in \Lambda_j}\sum\limits_{x_{n+1} \in \Lambda_A}F(d(x_{j-1},x_j)) e^{-\mu d(x_{j-1},x_j)}
\end{equation*}
with,
\begin{align}
&\mathcal{J}_{\mathcal{P}_n} = \Big|\Big|\int_0^t \int_0^{t-s_1
}...\int_0^{t-\sum_{j=1}^{n-1}s_j} d\textbf{s}\ [B,.]  \  \prod_{j=1}^{n}\left(\exp\left[s_j L^c_{\mathcal{P}_n^{j-1}}\right]  I^{0}_{\Lambda_j}/\|I^{0}_{\Lambda_j}\|_{cb}\right) \exp[(t-\sum_{j=1}^{n}s_j) L^c_{\mathcal{P}_n^n}]A \Big|\Big|
\label{Jpn}
\end{align}
with the ordered product $\prod_{j=1}^n C_j=C_1 C_2 ... C_n$ and the supremum is taken over all paths of length $n$.

Define the projector,
$$P=\lim_{t\to\infty}\exp(t\mathcal{L}^1)$$

The frustration free and irreducible condition allow us to insert the projectors,

\begin{align*}
\mathcal{J}_{\mathcal{P}_n} = &\Big|\Big|\int_0^t \int_0^{t-s_1
}...\int_0^{t-\sum_{j=1}^{n-1}s_j} d\textbf{s}\ [B,.]  \  \prod_{j=1}^{n}\left(\exp\left[s_j L^c_{\mathcal{P}_n^{j-1}}\right](\mathbb{1}-P)  I^{0}_{\Lambda_j}/\|I^{0}_{\Lambda_j}\|_{cb}\right)\\ &\exp[(t-\sum_{j=1}^{n}s_j) L^c_{\mathcal{P}_n^n}](\mathbb{1}-P)A \Big|\Big|
\end{align*}

Since, $[L^c_{\mathcal{P}_n^k},P^c_{\mathcal{P}_n^{k}}]=0$, $\forall \mathcal{P}_n,k$ and using the assumption (\ref{decay2}), it follows,
$$\mathcal{J}_{\mathcal{P}_n}\leq \exp(-\lambda t) \frac{C_{\phi}^n\|L^{0}\|_{\mu}^n t^n}{n!}$$

From which follows,
$$\|[A, \Gamma_t(B)] \|\leq C' \|A\|\ \|B\| \exp\left( \frac{-\lambda t + v t - d(A,B)}{\xi} \right)$$

\end{proof}
\begin{Lemma}
For  $A$ and $I_Z$ with $Z \cap \Lambda_A =\emptyset$,
\begin{equation*}
\|\ I^0_Z \exp(tL)A\ \| \leq \sum_{n=1}^\infty \|L^0\|_\mu^n \left(\sup\limits_{\mathcal{P}_n}\mathcal{J}_{\mathcal{P}_n}\right) \sum\limits_{\mathcal{P}_n}\sum\limits_{x_0\in \Lambda_A}\prod_{\Lambda_j \in \mathcal{P}_n}\sum_{x_j \in \Lambda_j}\sum\limits_{x_{n+1} \in Z}F(d(x_{j-1},x_j)) e^{-\mu d(x_{j-1},x_j)}
\end{equation*}
\end{Lemma}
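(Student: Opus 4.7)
The plan is to iterate Duhamel's formula on $\exp(tL)A$, extracting one $L^0$ interaction at each step, so that after $n$ iterations the only surviving terms are indexed by paths $\mathcal{P}_n = \{\Lambda_1,\ldots,\Lambda_n\}$ of $L^0$ interaction supports connecting $\Lambda_A$ to $Z$.

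For the first step I would set $\mathcal{K}_Z = \sum_{\Lambda_1 \cap Z \neq \emptyset} I^0_{\Lambda_1}$ and $L^c = L - \mathcal{K}_Z$, and apply the Duhamel identity
\begin{equation*}
I^0_Z \exp(tL)A = I^0_Z \exp(tL^c)A + \int_0^t I^0_Z \exp\bigl((t-s_1)L\bigr)\,\mathcal{K}_Z\,\exp(s_1 L^c)A\, ds_1.
\end{equation*}
The zeroth-order term vanishes because no $L^0$ interaction present in $L^c$ touches $Z$, so $\exp(tL^c)A$ keeps the identity on the $Z$-factor and $I^0_Z[\mathbb{1}_Z] = 0$; the $L^1$ part of $L^c$, even if it may intersect $Z$, is retained throughout the iteration and gets absorbed into the dynamical factor $\mathcal{J}_{\mathcal{P}_n}$ rather than into the path.

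Iterating: at step $k$ the pruned generator is $L^c_{\mathcal{P}_n^k} = L - \sum_{Z'\cap\mathcal{P}_n^k \neq \emptyset} I^0_{Z'}$, and the next Duhamel step extracts an $I^0_{\Lambda_{k+1}}$ with $\Lambda_{k+1} \cap \mathcal{P}_n^k \neq \emptyset$. After $n$ iterations the surviving contribution is a sum over paths $\mathcal{P}_n$ of nested time-ordered integrals; pulling out the factors $\|I^0_{\Lambda_j}\|_{cb}$ produces exactly $\mathcal{J}_{\mathcal{P}_n}\prod_j \|I^0_{\Lambda_j}\|_{cb}$ as defined in (\ref{Jpn}).

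The last step converts the sum over interaction supports into the lattice sum in the statement. For consecutive sites $x_{j-1}, x_j$ on the path, the defining inequality of $\|L^0\|_\mu$ gives
\begin{equation*}
\sum_{\Lambda_j \ni x_{j-1}, x_j} \|I^0_{\Lambda_j}\|_{cb} \leq \|L^0\|_\mu\, F\bigl(d(x_{j-1},x_j)\bigr)\, e^{-\mu d(x_{j-1},x_j)},
\end{equation*}
and taking the product over $j$ together with the sums over the $x_j \in \Lambda_j$, $x_0 \in \Lambda_A$, $x_{n+1} \in Z$ yields the factor $\|L^0\|_\mu^n$ and the geometric factor in the bound. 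The main obstacle is justifying the vanishing of the zeroth-order term at every iteration: one must confirm that the unpruned $L^1$ content of $L^c_{\mathcal{P}_n^k}$, together with the $L^0$ interactions disjoint from $\mathcal{P}_n^k$, cannot create an additional $L^1$-path that would force $\|L^1\|_\mu$ factors into the coefficient. Once this is settled, the rest is the standard Nachtergaele--Sims combinatorial bookkeeping.
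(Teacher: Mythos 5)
Your proposal follows essentially the same route as the paper's proof: the same Duhamel iteration peeling off one $I^0_{\Lambda_j}$ touching the previously extracted support at each step (the paper's $L^{0}_{\cap Z}$ is your $\mathcal{K}_Z$), the same grouping of the nested time-ordered integrals into $\mathcal{J}_{\mathcal{P}_n}$, and the same use of $\|L^0\|_\mu$ to convert the sum over interaction supports into the lattice sum. The one obstacle you flag --- that the unpruned $L^1$ content of $L^c$ could spoil the vanishing of the zeroth-order Duhamel term --- is not resolved in the paper either, which simply writes the expansion with that term already omitted.
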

\begin{proof}
The idea of the proof, is the gradually take out, each interaction term of the diffusive generator $\mathcal{L}^{0}$ forming so a path from $Z$ to $A$.
 Define the generator $L^{0}_{\cap Z}$,
$$L^{0}_{\cap Z} =\sum_{\Lambda_1\cap Z\not=\emptyset}I^{0}_{\Lambda_1} $$
We can then expand,

$$I^{0}_Z \exp(tL)A = \sum_{\Lambda_1\cap Z\not=\emptyset}\|I_{\Lambda_1}\|_{cb}\int_0^t I^{0}_Z\exp[s_0 L] I^{0}_{\Lambda_1}/\|I_{\Lambda_1}\|_{cb} \exp[(t-s_0)(L-L^{0}_{\cap Z})]A$$
We then continue this procedure for each term $I_{\Lambda_1} \exp[(t-s_0)(L-L^{0}_{\cap Z})]A$  for which $\Lambda_1 \cap \Lambda_A =\emptyset$. Let us say that after $k-1$ steps we have $\Lambda_k \cap \Lambda_A \not=\emptyset$, from our construction we see that we have built a path of $k$ steps from $Z$ to $A$ for which $\mathcal{J}_{\mathcal{P}_k(Z,A)}$ can be bounded by equation (\ref{Jpn}). For each $n$, we can then use the definition of $\|L\|_{\mu}$,
\begin{align*}
\left(\sup\limits_{\mathcal{P}_n}\mathcal{J}_{\mathcal{P}_n}\right)\sum_{\mathcal{P}_n}\prod_{\Lambda_j \in \mathcal{P}_n}\sum_{\Lambda_j}\|I^0_{\Lambda_j}\|_{cb}\leq \left(\sup\limits_{\mathcal{P}_n}\mathcal{J}_{\mathcal{P}_n}\right) \sum\limits_{\mathcal{P}_n} \|L^0\|_{\mu}^n \sup\limits_{x_0\in \Lambda_A}\prod_{\Lambda_j \in \mathcal{P}_n}\sum_{x_j \in \Lambda_j}\sum\limits_{x_{n+1} \in Z}F(d(x_{j-1},x_j))  e^{-\mu d(x_{j-1},x_j)}
\end{align*}
\end{proof}

\end{document}